\documentclass[12pt,a4paper,final]{article}

\usepackage[utf8]{inputenc}
\usepackage[T1]{fontenc}
\usepackage{amsmath,amssymb,amsfonts,mathtools,amsthm}
\usepackage{geometry}
\usepackage{setspace}
\usepackage{natbib}
\usepackage{booktabs}
\usepackage{multirow}
\usepackage{algorithm}
\usepackage{algpseudocode}
\usepackage{bm}
\usepackage{graphicx}
\usepackage[hidelinks]{hyperref}
\usepackage{subcaption}

\theoremstyle{plain}

\newtheorem{definition}{Definition}

\newtheorem{example}{Example}
\newtheorem{proposition}{Proposition}

\newcommand{\E}{E}
\newcommand{\R}{\mathbb{R}}

\newcommand{\bx}{\bm{x}}

\newcommand{\bX}{\bm{X}}

\newcommand{\bZ}{\bm{Z}}

\title{Distributionally balanced sampling designs}

\author{
    Anton Grafström\thanks{Email: anton.grafstrom@slu.se} 
    \ and 
    Wilmer Prentius\thanks{Email: wilmer.prentius@slu.se} \\[0.3cm]
    \normalsize Department of Forest Resource Management, \\
    \normalsize Swedish University of Agricultural Sciences, Umeå, Sweden
}

\date{\today}

\begin{document}
\maketitle

\setstretch{1}
\begin{abstract}
\noindent
We propose \emph{Distributionally Balanced Designs} (DBD), a new class of probability sampling designs that target representativeness at the level of the full auxiliary distribution rather than selected moments. In disciplines such as ecology, forestry, and environmental sciences, where field data collection is expensive, maximizing the information extracted from a limited sample is critical. More precisely, DBD can be viewed as \emph{minimum discrepancy designs} that minimize the expected discrepancy between the sample and population auxiliary distributions. The key idea is to construct samples whose empirical auxiliary distribution closely matches that of the population. We present a first implementation of DBD based on an optimized circular ordering of the population, combined with random selection of a contiguous block of units. The ordering is chosen to minimize the design-expected energy distance, a discrepancy measure that captures differences between distributions beyond low-order moments. This criterion promotes strong spatial spread, and yields low variance for Horvitz-Thompson estimators of totals of functions that vary smoothly with respect to auxiliaries. Simulation results show that approximate DBD achieves better distributional fit than state-of-the-art methods such as the local pivotal and local cube designs. Hence, DBD can improve the reliability of estimates from costly field data, making distributional balancing effective for constructing representative surveys in resource-constrained applications.
\vskip.2cm
\noindent \textbf{Keywords:} Balanced sampling, Energy distance, Spatial sampling, Probability sampling, Systematic sampling, Variance reduction.
\end{abstract}

\setstretch{1}
\newpage
\section{Introduction}
In modern survey sampling, auxiliary information is increasingly available for the entire population prior to sampling. The fundamental challenge of survey design is to use this information to construct a probability sampling scheme such that estimates vary as little as possible between different samples. Methods have been developed for specific objectives, such as balancing means or ensuring spatial spread, but a unified approach that ensures that the sample is a distributional microcosm of the population has been lacking.

A common approach using auxiliary variables is balanced sampling, exemplified by the cube method \citep{DevilleTille2004}. This approach selects a sample such that the Horvitz-Thompson estimates of the auxiliary totals approximately equal the true population totals. While highly effective when the target variable is linearly related to the auxiliary variables, balanced sampling offers no guaranteed variance reduction for non-linear relationships. If the target depends on the auxiliary variables in a more complex way, balancing solely on means is suboptimal.

In environmental sampling contexts, spatially balanced sampling methods have become increasingly popular. Such methods aim to select samples that are well-spread throughout the auxiliary space. Methods such as the Generalized Random Tessellation Stratified (GRTS) design \citep{StevensOlsen2004}, the Local Pivotal Method (LPM) \citep{GrafstromEtAl2012}, and Balanced Acceptance Sampling (BAS) \citep{Robertson2013} have been shown to be effective in capturing local trends. Other recent approaches to achieve spatial balance include ordering units by an approximate solution to the traveling salesman problem and then applying systematic sampling \citep{DicksonTille2016}, or using weakly-associated vectors to generate spread samples \citep{JauslinTille2020}. Although these methods ensure good coverage, they do not necessarily ensure the best distributional fit.

\citet{GT13} introduced doubly balanced sampling (the local cube method) by combining features of the local pivotal method and the cube method. This method can spread the sample in a set of auxiliary variables and balance the sample on a (possibly different) set of auxiliary variables, and was shown to be close to optimal for a linear model with spatially correlated errors. \citet{Robertson2025} demonstrated that it was efficient to spread and balance on a set of auxiliary variables with empirical tests.  

In this paper, we propose a framework that unifies the concepts of `balanced' and `well-spread' sampling. We posit that the ideal sample is one whose empirical distribution is as close as possible to the population distribution. If the sample distribution closely matches the population distribution, then the sample mean of any sufficiently smooth function of the auxiliary variables will be close to the population mean. This implies simultaneous variance reduction for linear trends, non-linear relationships, and spatial patterns.

We term this framework Distributionally Balanced Designs (DBD). To rigorously quantify distributional discrepancies, we use energy statistics \citep{SzekelyRizzo2013}. Specifically, we want to minimize the energy distance between the possible samples and the population. The energy distance metric belongs to the class of Maximum Mean Discrepancy (MMD) measures and provides a mathematically sound objective that captures differences in all moments, not just the means.

Finding subsets that minimize this distance leads to a combinatorial optimization problem. To make a numerical optimization computationally feasible, we here employ a circular population sequence, a structural approach rooted in the seminal work of \citet{Lahiri1951} and \citet{Lahiri1954} on systematic sampling. By arranging the population in an optimized circular order and selecting a contiguous block with a random start, we reduce the size of the support considerably and maintain equal inclusion probabilities. We use simulated annealing \citep{Kea83} to perform the optimization. \citet{G26} introduced a similar construction for creating master samples with optimized panels and for coordinating the panels in a positive or negative manner.

The main contributions of this paper are:
\begin{enumerate}
    \item We introduce the use of energy distance in probability sampling, providing a rigorous criterion for comparing distributional fit of samples.
    \item We show that the error of the Horvitz-Thompson estimator of functions that vary smoothly with the auxiliary variables is controlled by the distributional discrepancy. 
    \item We provide an optimization algorithm based on simulated annealing with efficient updates that organizes the population into a sequence in which every contiguous block is a representative sample.
    \item We demonstrate via simulations that DBD achieves better distributional fit (lower expected energy distance) than state-of-the-art methods and confirm the reduction of variance.
\end{enumerate}

The remainder of this paper is organized as follows. Section~\ref{sec:method} details the methodology, introduces the optimization objective, defines distributionally balanced designs, and describes the optimization algorithm with efficient updates. Section~\ref{sec:var_est} discusses the estimation of variance. Section~\ref{sec:simulation} presents simulation examples in which DBD is compared with some existing designs. Section~\ref{sec:scalability} discusses computational scalability. Finally, Section~\ref{sec:conclusions} offers concluding remarks.

\section{Methodology} \label{sec:method}

Let $U = \{1, \dots, N\}$ denote the population with auxiliary variables $\bx_i \in \R^p$ for the unit $i\in U$. We consider an equal probability fixed size design with inclusion probabilities $\pi_i=n/N$, where the population is arranged in a circular sequence. Let the vector $\bm{u} = (u_1, u_2, \dots, u_N)$, be a permutation of the indices in $U$. A sample $S$ of size $n$ is selected by drawing a random starting position $j$ uniformly among $\{1, \dots, N\}$ and selecting the contiguous block 
$$s_j = s_j(\bm{u};n)=\{u_{((j+k-1) \text{ mod } N) +1 }|k=1,\dots, n\}$$
in the circular sequence. This design has exactly $N$ possible samples, each with probability $1/N$. Because the start is uniformly random, each unit has the inclusion probability $n/N$. By changing the order of the sequence, we can achieve designs with different properties. See Figure~\ref{fig:design} for a conceptual illustration of the design before and after optimization of the sequence order.
\begin{figure}[htb!]
    \centering
    \includegraphics[width=0.8\linewidth]{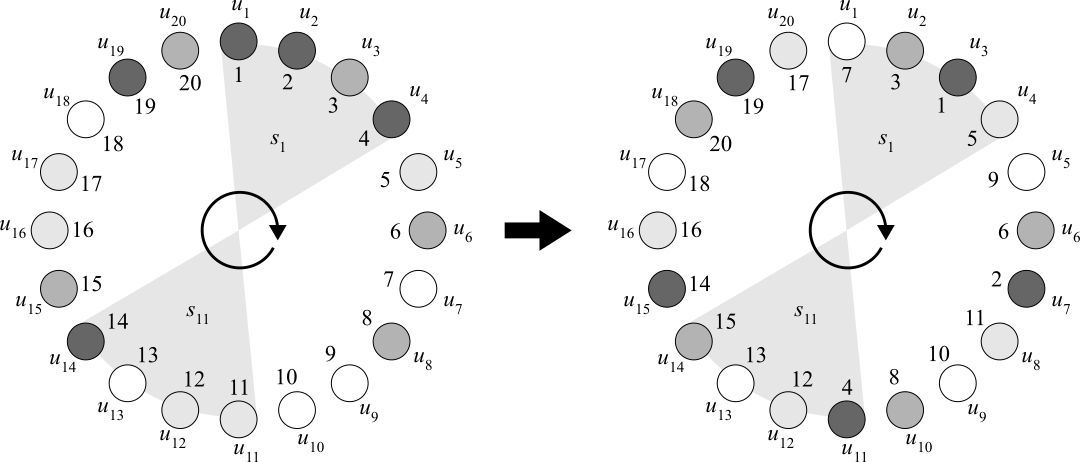}
    \caption{Illustration of the circular sequence design for a population of size $N=20$ and sample size $n=4$, before and after optimization. The nodes represent population units, with grayscale intensity indicating the values of their auxiliary variables. The inner circle of numbers provides the indices of the units in $U$. The samples $s_1$ and $s_{11}$ are shown as shaded sectors. Left: Initial sequence. Many samples selected as a contiguous block are unrepresentative of the population.  Right: The optimized design where the sequence has been reordered. All contiguous blocks of size $n=4$ now provide a good representation of the population distribution.}
    \label{fig:design}
\end{figure}

Let $F_{s_j}$ and $F_U$ denote the empirical distributions of
$\{\bx_i : i \in s_j\}$ and $\{\bx_i : i \in U\}$, respectively.
Explicitly,
\[
F_{s_j} = \frac{1}{n}\sum_{i\in s_j} \delta_{\bx_i},
\qquad
F_U = \frac{1}{N}\sum_{i\in U} \delta_{\bx_i},
\]
where $\delta_{\bx}$ denotes the Dirac measure (unit point mass) at
$\bx \in \mathbb{R}^p$. To measure how well a sample distribution $F_{s_j}$ matches the population distribution $F_U$, we use the energy distance \citep{SzekelyRizzo2013}:
\begin{equation} \label{eq:ed}
    \mathcal{E}(F_{s_j}, F_U) = 2\E\|\bm{X} - \bm{Z}\| - \E\|\bm{X} - \bm{X}'\| - \E\|\bm{Z} - \bm{Z}'\|,
\end{equation}
where $\bm{X}, \bm{X}' \sim F_{s_j}$ are independent draws from the sample distribution, $\bm{Z}, \bm{Z}' \sim F_U$ are independent draws from the population distribution, and $\|\cdot\|$ denotes the euclidean distance. The energy distance is zero if and only if $F_{s_j} = F_U$. 

The term $\E\|\bm{Z} - \bm{Z}'\|$ in \eqref{eq:ed} depends only on the fixed population and is constant across all samples. For the empirical distribution $F_U$, it is given by 
$$\E\|\bm{Z} - \bm{Z}'\|=\frac{1}{N}\sum_{i\in U}\Phi_i,\; \text{where} \; \Phi_i=\frac{1}{N}\sum_{k\in U}\|\bx_i-\bx_k\|.$$
The $\Phi_i$ represents the mean distance from the unit $i$ to the population. 

The other two components of \eqref{eq:ed} create a balance between competing forces. The component $\E\|\bm{X} - \bm{X}'\|$ represents average pairwise distances within the sample. Since $F_{s_j}$ is uniform over the sample:
$$
    \E\|\bm{X} - \bm{X}'\| = \frac{1}{n^2} \sum_{i \in s_j} \sum_{k \in s_j} \|\bx_i - \bx_k\|.
$$
Minimizing \eqref{eq:ed} requires maximizing this term, which forces sample units apart and avoids clustering. The next component is the population fit (attraction), $\E\|\bm{X} - \bm{Z}\|$, which represents the average distances from the sample units to the population mass. Since both $F_{s_j}$ and $F_U$ are uniform:
\begin{equation*}
    \E\|\bm{X} - \bm{Z}\| = \frac{1}{n} \sum_{i \in s_j} \Phi_i,
\end{equation*}
with $\Phi_i$ as previously defined. Minimizing this term while maximizing spread forces the sample to center within the population cloud and to correctly represent its geometric shape.

Note that for the random sample $S$, the empirical distribution
\[
F_S = \frac{1}{n}\sum_{i\in S} \delta_{\bx_i} \quad \text{ satisfies } \quad \E[F_S] = F_U,
\]
where the expectation is taken with respect to the sampling design.
Thus, in the case of equal probabilities, the empirical distribution $F_S$ is
design-unbiased for the population distribution $F_U$. 

We now formalize the meaning of a distributional balanced design that is independent of any specific sampling construction.

\begin{definition}[Distributionally balanced design]
Let $F^\ast$ denote a target auxiliary distribution, and let $\mathcal{P}$ be a class of probability sampling designs with specified inclusion probabilities. 
A design $p^\ast \in \mathcal{P}$ is called \emph{distributionally balanced} if it minimizes the expected distributional discrepancy between the auxiliary-variable distribution of a random sample and the target distribution, i.e., if
\[
p^\ast \in \arg\min_{p \in \mathcal{P}} 
\;E\!\left[ D(F_S, F^\ast) \right],
\]
where $F_S$ is the auxiliary-variable distribution of the sample and $D(\cdot,\cdot)$ is a chosen distributional discrepancy measure. In other words, for a given discrepancy measure, a DBD achieves the best possible approximation to $F^\ast$ within the design class $\mathcal{P}$, rather than exact balance.
\end{definition}

The estimation error of the design-unbiased estimator of the total of functions that vary smoothly with the auxiliary variables is controlled by the distributional mismatch between the sample and population auxiliary distributions. Here, `vary smoothly' is made precise by assuming that the target variable is $y_i = f(\bx_i)$, $i\in U$, for some $f$ in the function class induced by the energy-distance kernel.

\begin{proposition}[Upper bound on the mean square error via energy distance]
\label{prop:rkhs_bound}
Let $k$ be the reproducing kernel associated with the energy distance
\citep[see][]{Sejdinovic2013} so that, for any probability measures $P,Q$
with finite first moment,
\[
\mathcal{E}(P,Q)=\mathrm{MMD}_k^2(P,Q)
=\|\mu_P-\mu_Q\|_{\mathcal{H}_k}^2,
\]
where $\mu_P=\E_{\bX\sim P}[k(\bX,\cdot)]\in\mathcal{H}_k$ denotes the kernel
mean embedding. Let $f\in\mathcal{H}_k$, define $y_i=f(\bx_i)$, and consider
the equal-probability fixed-size estimator
\[
\hat Y=\frac{N}{n}\sum_{i\in S} y_i
\qquad\text{and}\qquad
Y=\sum_{i\in U} y_i.
\]
Then for any sampling design with a random sample $S$ of fixed size $n$,
\begin{equation}
\label{eq:rkhs_mse_bound}
\E\!\left[(\hat Y-Y)^2\right]
\;\le\;
N^2\,\|f\|_{\mathcal{H}_k}^2\;
\E\!\left[\mathcal{E}(F_S,F_U)\right].
\end{equation}
\end{proposition}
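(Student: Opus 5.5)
The bound follows by writing the estimation error as an inner product in $\mathcal{H}_k$ and applying Cauchy--Schwarz sample by sample. For every fixed realization $S=s$ of size $n$, the estimator error is
\[
\hat Y - Y = N\Bigl(\frac{1}{n}\sum_{i\in s} f(\bx_i) - \frac{1}{N}\sum_{i\in U} f(\bx_i)\Bigr) = N\bigl(\E_{F_s} f - \E_{F_U} f\bigr).
\]
By the reproducing property, $f(\bx_i)=\langle f, k(\bx_i,\cdot)\rangle_{\mathcal{H}_k}$. Since $F_s$ and $F_U$ are finite sums of Dirac masses, linearity gives $\E_{F_s} f = \langle f,\mu_{F_s}\rangle_{\mathcal{H}_k}$ with $\mu_{F_s}=\frac1n\sum_{i\in s}k(\bx_i,\cdot)$, and likewise for $F_U$. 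No Bochner-integral issues arise because the embeddings are finite linear combinations of kernel sections, which lie in $\mathcal{H}_k$. Hence $\hat Y - Y = N\langle f, \mu_{F_s}-\mu_{F_U}\rangle_{\mathcal{H}_k}$.

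Cauchy--Schwarz in $\mathcal{H}_k$ then gives
\[
(\hat Y - Y)^2 \le N^2\|f\|_{\mathcal{H}_k}^2\,\|\mu_{F_s}-\mu_{F_U}\|_{\mathcal{H}_k}^2 = N^2\|f\|_{\mathcal{H}_k}^2\,\mathcal{E}(F_s,F_U).
\]
The last equality is the assumed identity $\mathcal{E}=\mathrm{MMD}_k^2$, which applies because empirical measures have finite first moments. This bound holds pointwise for every possible sample. Taking expectation over the design, which is a finite weighted sum over samples with fixed $f$ and $N$, yields \eqref{eq:rkhs_mse_bound}.

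The only delicate point is the kernel itself. The energy-distance kernel of \citet{Sejdinovic2013} is a distance-induced kernel, e.g.\ $k(\bx,\bx')=\tfrac12(\|\bx-\bx_0\|+\|\bx'-\bx_0\|-\|\bx-\bx'\|)$. It is defined only up to the choice of center $\bx_0$, so we must check that the mean-embedding difference, and hence the MMD, does not depend on $\bx_0$. It does not: the $\bx_0$-terms are constant in one argument and cancel between two probability measures. Up to the normalizing constant, which the hypothesis already absorbs into the identity $\mathcal{E}=\mathrm{MMD}_k^2$, everything else is routine. Fixed sample size is used only so that $F_S$ is a probability measure with weights $1/n$ and $\hat Y$ has the stated form.
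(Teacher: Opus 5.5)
Your proof is correct and is essentially the paper's own argument: write $\hat Y(s)-Y=N\langle f,\mu_{F_s}-\mu_{F_U}\rangle_{\mathcal{H}_k}$ via the reproducing property, apply Cauchy--Schwarz for each fixed sample, invoke the assumed identity $\mathcal{E}=\mathrm{MMD}_k^2$, and average over the design. One small correction to your closing aside, which the proof does not need since the proposition fixes $k$: changing the center $\bx_0$ shifts $\mu_P-\mu_Q$ by the constant $\tfrac12\int\|\bx-\bx_0\|\,d(P-Q)(\bx)$, so the embedding difference itself does depend on $\bx_0$, and only its norm (the MMD) is invariant, because one-argument terms integrate to zero against the zero-mass signed measure $P-Q$.
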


See Appendix~\ref{app:proof} for a proof of Proposition~\ref{prop:rkhs_bound}. Thus, minimizing the expected energy distance controls an upper bound on the MSE (and hence the variance) of the Horvitz–-Thompson estimator for target variables that vary smoothly with the auxiliary variables. 

To implement a distributionally balanced design, we must specify the design class $\mathcal{P}$, the discrepancy measure $D$ and the target distribution $F^\ast$. In this paper, we restrict $\mathcal{P}$ to the class of equal-probability designs generated by circular permutations. Furthermore, we employ the energy distance $\mathcal{E}$ as our discrepancy measure $D$, and let $F^\ast=E(F_S)=F_U$. Our objective is therefore to minimize the variability of the realizations $F_{s_j}$ of $F_S$ around $F_U$. We do that by minimizing the expected discrepancy between $F_S$ and $F_U$. The expected discrepancy of a design given by the circular sequence $\bm{u}$ is 
\begin{equation}\label{eq:mean_ed}
 \bar{\mathcal{E}}(\bm{u};n)=\E\left[\mathcal{E}(F_S,F_U)\right] = \frac{1}{N} \sum_{j=1}^N \mathcal{E}(F_{s_j},F_U),  
\end{equation}
where $s_j$ is the sample of size $n$ starting at position $j$ in the sequence $\bm{u}$. Let now $\mathcal{P}$ denote the set of all $N!$ permutations of the indices in $U$. We want to find the sequence that minimizes the expected discrepancy of the design, i.e., to find
\begin{equation} \label{eq:opt}
  \bm{u}^{\ast} \in \underset{\bm{u} \in \mathcal{P}}{\arg\min}
\quad
\bar{\mathcal{E}}(\bm{u};n).
\end{equation}
Since the energy distance is nonnegative and equals zero if and only if the sample and population distributions coincide, minimizing \eqref{eq:mean_ed} directly targets the elimination of distributional variability in the design. Thus, among all circular designs in this class, no alternative construction can achieve a lower expected discrepancy. Due to the factorial growth of the search space, an exhaustive search is computationally intractable for all but the smallest population sizes. Hence, instead of using $\bm{u}^{\ast}$ from \eqref{eq:opt}, we employ simulated annealing to identify a nearly optimal solution $\bm{u}^\circ$.

\subsection{Algorithm}

The Algorithm~\ref{alg:optimization} gives the pseudo-code for selecting the optimized design sequence $\bm{u}^\circ$ from an initial population sequence $\bm{u}$ through simulated annealing. At each iteration, a move is made by swapping the positions of two units in the circular sequence. To ensure connectivity of the permutation space, the first unit is selected uniformly at random, and the second unit is selected uniformly at random from the remaining units. As shown in Appendix~\ref{app:update}, the update of the objective function can be made in $O(n)$ time. An efficient implementation of Algorithm~\ref{alg:optimization} is available in the \texttt{rsamplr} R package \citep{P25}.

\begin{algorithm}[H] 
\caption{Optimization through simulated annealing}
\begin{algorithmic}[1] 
\State \textbf{Input:}
    \State Initial state of the ordered circular population $\bm{u}^{(0)}$, and sample size $n$. 
    \State Iterations $K$, initial temp $T^{(0)}$, and cooling rate $\alpha$.
\State \textbf{Compute:} 
    Expected energy distance $\bar{\mathcal{E}}^{(0)} = \bar{\mathcal{E}}(\bm{u}^{(0)};n)$.
\State \textbf{Initialize:} Best state pair $\bm{u}^\circ \leftarrow \bm{u}^{(0)}$ and $\bar{\mathcal{E}}^\circ \leftarrow \bar{\mathcal{E}}^{(0)}$.
\For{$k = 1$ to $K$}
    \State $\bm{u}^{(k)} \leftarrow \bm{u}^{(k-1)}$.
    \State Select two positions $a,b$ with $a \neq b$, and swap the values of $u^{(k)}_a$ and $u^{(k)}_b$.
    \State Compute expected energy distance $\bar{\mathcal{E}}^{(k)}$.

    \If{$\bar{\mathcal{E}}^{(k)} < \bar{\mathcal{E}}^{\circ}$}
        \State New best state $\bm{u}^\circ \leftarrow \bm{u}^{(k)}$, $\bar{\mathcal{E}}^\circ \leftarrow \bar{\mathcal{E}}^{(k)}$.
    
    \ElsIf{$\bar{\mathcal{E}}^{(k)} \geq \bar{\mathcal{E}}^{(k-1)}$ \textbf{and} $U(0,1) \geq \exp(-(\bar{\mathcal{E}}^{(k)} - \bar{\mathcal{E}}^{(k-1)}) / T^{(k-1)})$}
        \State Reject swap and reset $\bm{u}^{(k)} \leftarrow \bm{u}^{(k-1)}$, $\bar{\mathcal{E}}^{(k)} \leftarrow \bar{\mathcal{E}}^{(k-1)}$.
    \EndIf

    \State Update temperature $T^{(k)} \leftarrow \alpha T^{(k-1)}$.
\EndFor
\State \Return Optimized sequence $\bm{u}^\circ$.
\end{algorithmic} \label{alg:optimization}
\end{algorithm}

Using the optimized circular sequence $\bm{u}^\circ$, a random sample $S$ of size $n$ is selected by drawing a random starting position $j$ uniformly among $\{1, \dots, N\}$ and selecting the contiguous block $s_j$ of length $n$ in the sequence.

\section{Variance estimation} \label{sec:var_est}
Since the proposed design achieves a strong spread in the auxiliary space, the standard variance estimator is not applicable. As the spreading forces many second order inclusion probabilities to be very small (or even zero), the standard variance estimator becomes highly unstable (and may be severely biased in case of second order inclusion probabilities equal to zero).
Instead, we recommend using a local mean variance estimator that approximates the local variance structure. Such estimators are commonly used for spatially balanced samples, and a straightforward version is given by
\begin{equation} \label{eq:varest}
    \hat{V}(\hat{Y}) = N^2 \frac{S_k^2}{n}, \quad \text{where} \quad  S_k^2 = \frac{k}{n(k-1)} \sum_{i \in S} (y_i - \bar{y}_i)^2,
\end{equation}
where $k$ is the size of the neighborhood (including unit $i$), $\bar{y}_i$ is the local mean
$
\bar{y}_i = k^{-1} \sum_{j \in G_i} y_j,
$ and $G_i$ is the set of the $k$ nearest neighbors of unit $i$ in the auxiliary space (including the unit itself).

The local mean variance estimator \eqref{eq:varest} automatically adapts to the structure of the target variable. If nearby units in the auxiliary space tend to have similar 
$y$-values, then each local neighborhood reflects the natural smoothness in the data, and the estimator measures the remaining local variation. If, on the other hand, the target variable does not depend on the auxiliary variables, then the nearest-neighbor groups are essentially random collections of units, and the estimator reflects the overall variation in the population. In this way, the same choice of $k$ works in both situations, and the estimator adjusts itself to the amount of structure present in the data without requiring any tuning.

In practice, values in the range $2 \leq k \leq 4$ typically provide good performance, while setting $k=n$ reproduces the standard variance estimator for independent observations. As $k$ increases toward $n$, the estimator tends to become more conservative. For further discussion, see \citet{GrafstromSchelin2014}.

\section{Simulation examples} \label{sec:simulation}
We illustrate the behavior and performance of the proposed design through three examples. Example~\ref{ex:rate} examines how the expected energy distance decreases as the number of optimization iterations increases. It also assesses robustness by studying the variability of the expected energy distance between repeated optimization runs. Example~\ref{ex:design_comparison} compares the proposed DBD implementation with existing sampling designs in terms of distributional fit, spatial spread, and balance for artificial populations of varying dimensions. Finally, Example~\ref{ex:meuse} compares DBD with existing sampling designs on a real data set, including target variables.

The quality of the designs is assessed using four metrics. For all metrics, lower values indicate better performance. First, we report the mean of the energy distance \eqref{eq:ed} between the sample and population distributions. Second, to measure the spatial spread of the sample, we used the mean of the spatial balance \citep{StevensOlsen2004}. Let $v_i = \sum_{k \in V_i} \pi_k$ be the sum of the inclusion probabilities of the population units within the Voronoi cell $V_i$ of the sample unit $i$. The metric is defined as the mean squared deviation of these sums from their expected value of 1:
\begin{equation*}
    SB(S) = \frac{1}{n} \sum_{i \in S} (v_i - 1)^2.
\end{equation*}
Ideally, a perfectly spread sample partitions the population into regions of equal probability mass ($v_i=1$). A lower expected value of $SB(S)$ indicates a more spatially balanced design.

The local balance (LB) is defined as the root mean squared discrepancy of the balancing equations within the Voronoi partitions \citep{PG24}. Let $\bZ$ be the $N \times (p+1)$ matrix of auxiliary variables augmented with a column of ones, i.e., the row for unit $j$ is $\bm{z}_j = (1, \bx_j^\top)$. For each sample unit $i \in S$, let $V_i$ denote the set of population units closer to $i$ than to any other unit in $S$. The local imbalance vector for the neighborhood $V_i$ is defined as $\bm{e}_i = \pi_i^{-1}\bm{z}_i - \sum_{j \in V_i} \bm{z}_j$, which represents the deviation between the expansion estimator of the neighborhood total and the true neighborhood total. The local balance measure is then
\begin{equation*}
    LB(S) = \sqrt{ \frac{1}{N} \sum_{i \in S} \bm{e}_i^\top \mathbf{Q}^{-1} \bm{e}_i },
\end{equation*}
where $\mathbf{Q} = \bZ^\top \bZ$ is the Gram matrix of the augmented population matrix. A lower expected value of $LB(S)$ indicates that the samples more accurately represent the local density and auxiliary characteristics of their respective neighborhoods. 

Balance deviation (BD) is measured for all auxiliary variables by the euclidean distance between the HT-estimators and the true totals 
\[
BD(S)=\|\hat{\bX}-\bX\|,
\]
where $$\hat{\bX}=\sum_{i \in S}\frac{\bx_i}{\pi_i}  \quad \text{ and } \quad \bX=\sum_{i\in U}\bx_i.$$

\begin{example}[Decay of the expected energy distance and variability across optimizations] \label{ex:rate}
    We generated a synthetic population of size $N=1000$ with $p=5$ auxiliary variables. The auxiliary variables $\bx_i = (x_{i1}, \dots, x_{ip})^\top$ were drawn independently from a uniform distribution on $[0,1]$. The sample size was chosen as $n=50$. Figure~\ref{fig:rate} illustrates the decay of the expected energy distance in an optimization run and compares to the mean of the energy distance achieved from a set of samples selected using the local pivotal method. After about $10^6$ iterations, the curve flattened out. After less than $50{,}000$ iterations, the expected discrepancy of the optimized design is lower than for the pivotal method.
    \begin{figure}[htb!]
    \centering
    \begin{subfigure}[t]{0.66\textwidth}
        \centering
        \includegraphics[width=1\linewidth]{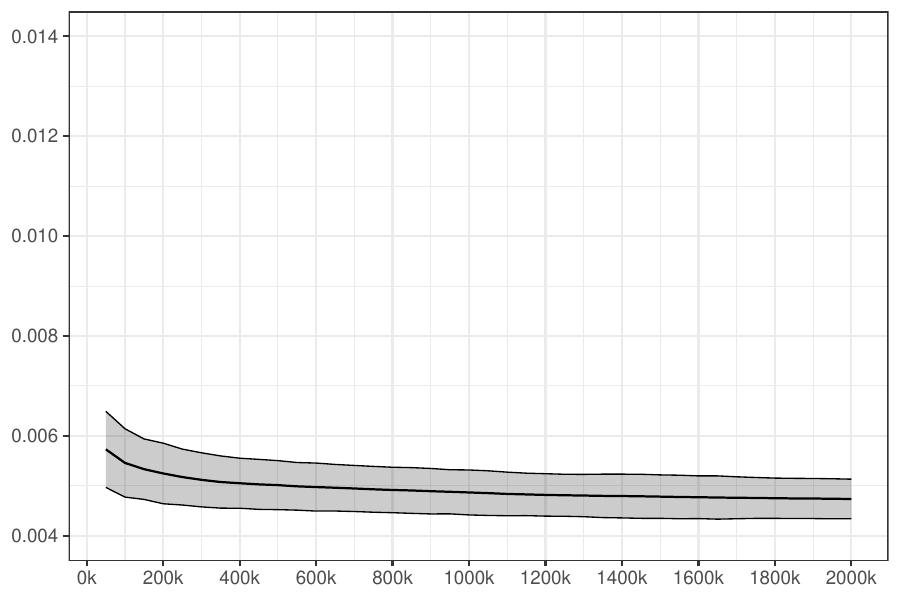}
    \end{subfigure}
    \begin{subfigure}[t]{0.33\textwidth}
        \centering
        \includegraphics[width=1\linewidth]{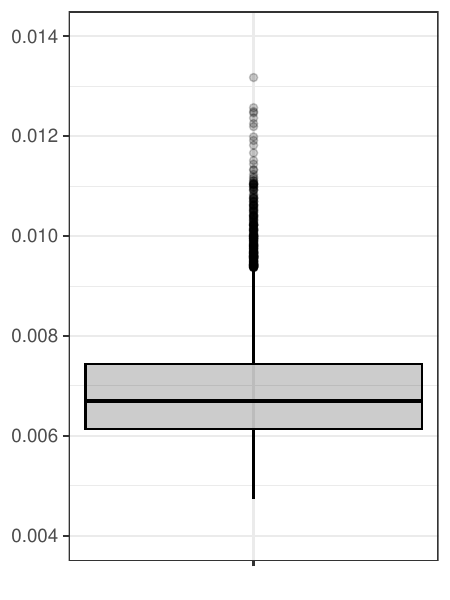}
    \end{subfigure}
    \caption{
        Left panel: expected energy distance of a circular DBD at iterations from 50k-2000k, $\pm 2$ standard deviations. Right panel: box-plot of energy distances of 10000 samples of size $n=50$ selected by the local pivotal method with the same $p=5$ auxiliary variables.
    }
    \label{fig:rate}
\end{figure}

To assess the robustness of the simulated annealing procedure, we performed $50$ independent runs of the optimization algorithm, each starting from a different random initial permutation of the population sequence and using the same annealing schedule, and a total number of iterations at $10^6$. The only source of randomness across runs was the order of the initial sequence and the stochastic decisions within the annealing algorithm.

For each sequence, we computed the expected energy distance $\bar{\mathcal{E}}(\bm{u}^\circ;n)$ as defined in \eqref{eq:mean_ed}. 
The resulting values had a relative standard error of less than $1$\%. 
\end{example}

Example~\ref{ex:rate} indicates that the simulated annealing procedure reliably finds sequences with similar distributional properties. In particular, the variability across optimizations is substantially smaller than the gap between the optimized design and competing designs such as the local pivotal or local cube methods, as seen in the following Example~\ref{ex:design_comparison}. Thus, the proposed distributionally balanced design is not sensitive to the specific realization of the optimization algorithm, and its performance reflects the underlying design criterion rather than a single favorable run.

\begin{example}[Comparisons with some existing designs] \label{ex:design_comparison}
To evaluate the performance of the proposed method across varying degrees of complexity, we generated synthetic populations of size $N=1000$ with $p$ auxiliary variables, where $p \in \{2, 5, 10, 20\}$. The auxiliary variables $\bx_i = (x_{i1}, \dots, x_{ip})^\top$ were drawn independently from a uniform distribution on $[0,1]$. We selected samples of fixed size $n=50$ with equal inclusion probabilities (for $p=5$ we also selected samples of sizes $n=100$ and $n=200$). The proposed method is compared with three designs:
\begin{enumerate}
    \item SRS: Simple random sampling without replacement.
    \item LPM: The local pivotal method \citep{GrafstromEtAl2012}, which selects samples that are well spread in the auxiliary space.
    \item LCube: The local cube method \citep{GT13}, which selects samples that are balanced on the auxiliary totals while maintaining spatial spread.
    \item DBD: The proposed circular distributionally balanced design. The population sequence was optimized using the incremental simulated annealing algorithm with $10^7$ iterations.
\end{enumerate}

For competitor methods (SRS, LPM, LCube), we performed 10,000 independent Monte Carlo repetitions using the \texttt{rsamplr} R package. For DBD, we evaluated the performance on all $N$ possible samples defined by the optimized sequence (i.e., for the complete design). Table~\ref{tab:results_metrics} summarizes the results for dimensions $p=2, 5, 10, 20$ and sample size $n=50$. The resulting distributions of the different metrics under DBD, LCube and LPM are shown in Figure~\ref{fig:metrics_dist}. Table~\ref{tab:results_samplesize} shows the result for $p=5$ auxiliary variables when increasing the sample size to $n=100$ and $n=200$.
\begin{table}[htb!]
\centering
\caption{Simulation results for $N=1000, n=50$ across dimensions $p \in \{2, 5, 10, 20\}$. The table compares the methods on distributional fit using mean energy distance (mean $\mathcal{E}$), spatial balance (mean SB), local balance (mean LB), and balance deviation of auxiliary variables (mean BD). Lower values indicate better performance.
}
\label{tab:results_metrics}
\vspace{0.2cm}
\begin{tabular}{ll rrrr}
\toprule Dims & Method & $\mathcal{E}$ &  SB &  LB & BD \\
\midrule
2  & SRS   &     0.0099  &     0.3375  &     0.1459  &  49.79  \\
   & LPM   &    0.0015  &     0.0879  &     0.0769  &  10.50  \\
   & LCUBE &    0.0013  &     0.0825  &     0.0751  &   7.97  \\
   & DBD   &    0.0010  &     0.0612  &     0.0646  &   4.88  \\
\midrule
5  & SRS   &     0.0167  &     0.2518  &     0.1831  &  84.38  \\
   & LPM   &    0.0069  &     0.1342  &     0.1464  &  36.50  \\
   & LCUBE &    0.0053  &     0.1265  &     0.1429  &  15.07  \\
   & DBD   &    0.0046  &     0.1157  &     0.1391  &  12.44  \\
\midrule
10  & SRS   &     0.0241  &     0.3493  &     0.2739  &  122.96  \\
   & LPM   &    0.0145  &     0.2768  &     0.2566  &  74.54  \\
   & LCUBE &    0.0104  &     0.2702  &     0.2551  &  25.79  \\
   & DBD   &    0.0096  &     0.2629  &     0.2529  &  23.41  \\
\midrule
20  & SRS   &     0.0343  &     0.5651  &     0.4329  &  175.59  \\
   & LPM   &    0.0252  &     0.5151  &     0.4242  &  129.13  \\
   & LCUBE &    0.0171  &     0.5179  &     0.4239  &  45.15  \\
   & DBD   &    0.0167  &     0.5158  &     0.4233  &  41.76  \\
\bottomrule
\end{tabular}
\end{table}
\begin{table}[htb!]
\centering
\caption{Simulation results for $N=1000$ with $p=5$ auxiliary variables at increased sampling fractions ($n=100$ and $n=200$). The table compares the methods on distributional fit using mean energy distance (mean $\mathcal{E}$), spatial balance (mean SB), local balance (mean LB), and balance deviation of auxiliary variables (mean BD). Lower values indicate better performance.
}
\label{tab:results_samplesize}
\vspace{0.2cm}
\begin{tabular}{ll rrrr}
\toprule Size & Method & $\mathcal{E}$ &  SB &  LB & BD \\
\midrule
100  & SRS   &    0.0078  &     0.2560  &     0.1295  &  57.72  \\
     & LPM   &    0.0028  &     0.1400  &     0.1021  &  21.33  \\
     & LCUBE &    0.0021  &     0.1412  &     0.1017  &   7.65  \\
     & DBD   &    0.0019  &     0.1331  &     0.0999  &   6.37  \\
\midrule
200  & SRS   &    0.0035  &     0.2781  &     0.0942  &  38.70  \\
     & LPM   &    0.0011  &     0.1590  &     0.0740  &  12.37  \\
     & LCUBE &    0.0008  &     0.1729  &     0.0761  &   3.89  \\
     & DBD   &    0.0007  &     0.1630  &     0.0742  &   3.12  \\
\bottomrule
\end{tabular}
\end{table}
\begin{figure}[htb!]
    \centering
    \begin{subfigure}[t]{0.95\textwidth}
        \centering
        \includegraphics[width=0.95\linewidth]{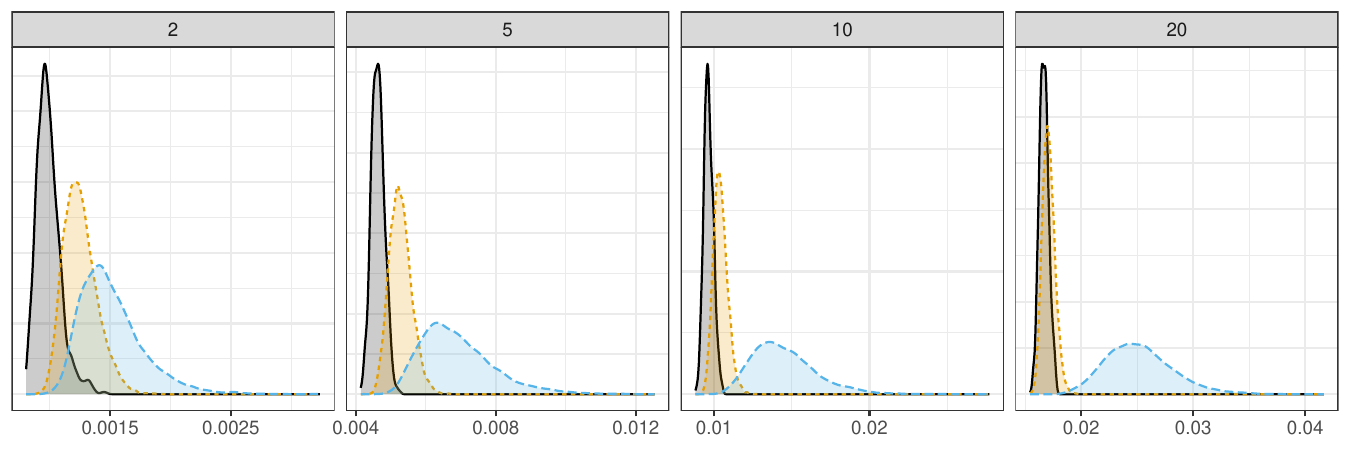}
    \end{subfigure}
    \begin{subfigure}[t]{0.95\textwidth}
        \centering
        \includegraphics[width=0.95\linewidth]{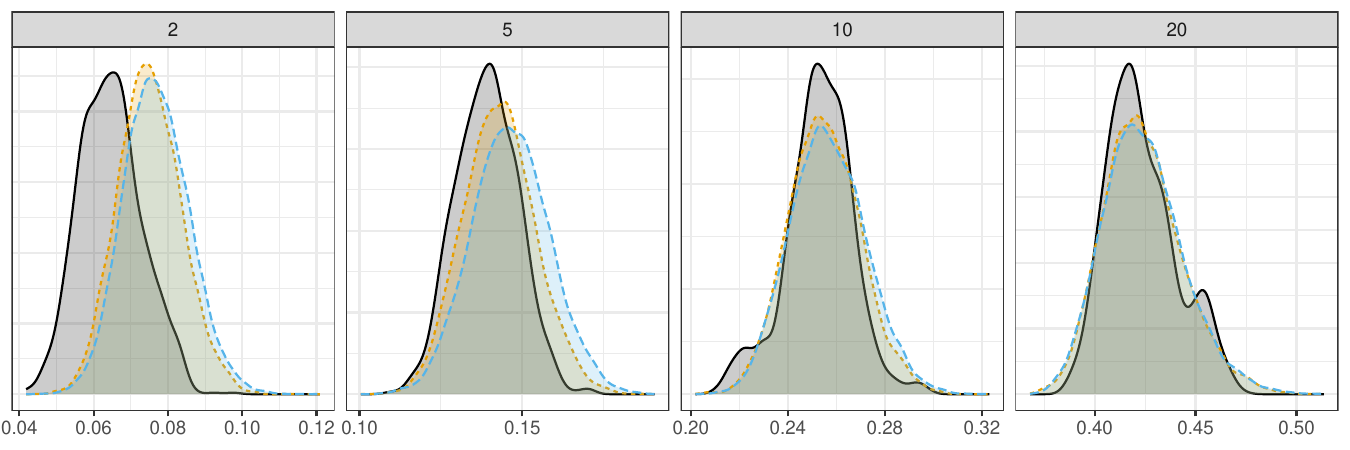}
    \end{subfigure}
    \begin{subfigure}[t]{0.95\textwidth}
        \centering
        \includegraphics[width=0.95\linewidth]{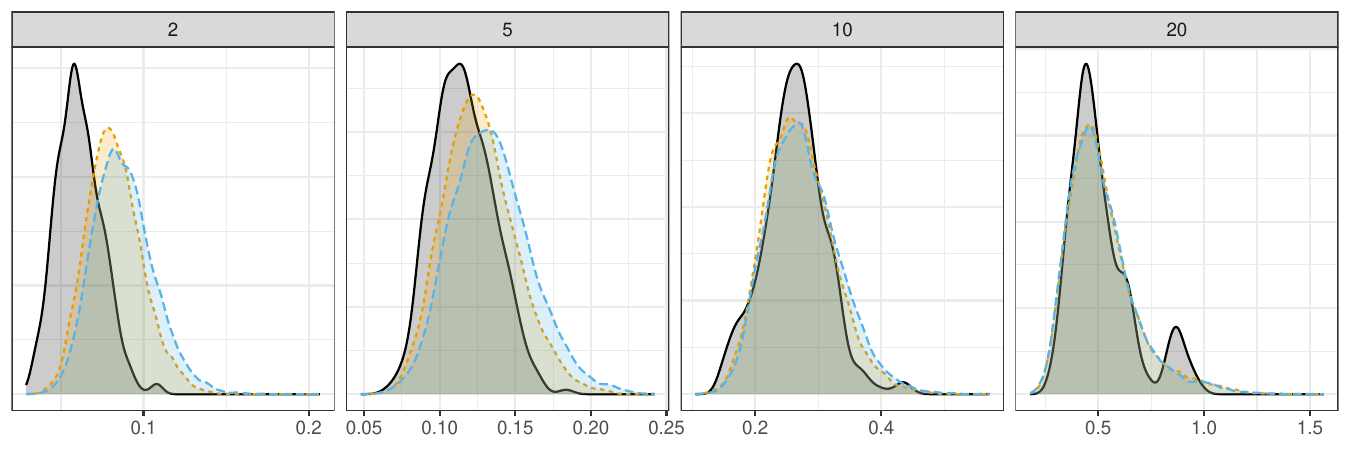}    \end{subfigure}
    \begin{subfigure}[t]{0.95\textwidth}
        \centering
        \includegraphics[width=0.95\linewidth]{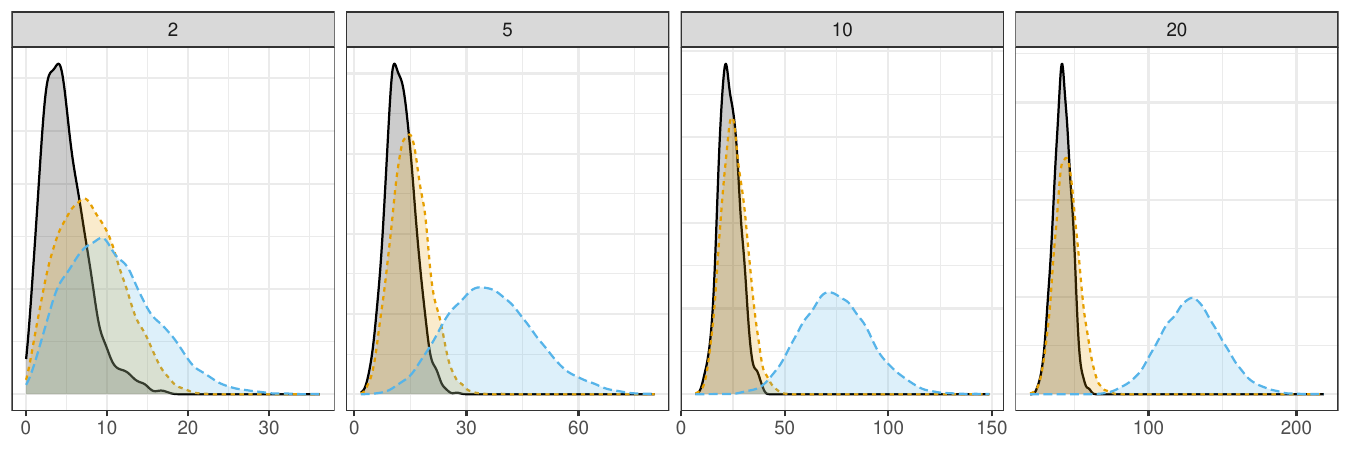}
    \end{subfigure}
    \caption{
    Distributions of the different metrics under three designs with sample size $n=50$. Colors represent the designs: gray is Circular DBD ($10^7$ iterations), orange is LCube, blue is LPM. First row: energy distance. Second row: the local balance measure. Third row: spatial balance. Fourth row: balance deviation. Columns: number of auxiliary variables. 
    }
    \label{fig:metrics_dist}
\end{figure}
\end{example}

From Example~\ref{ex:design_comparison}, we conclude that the optimized distributionally balanced design provides the best distributional fit across all dimensions. We also notice that the local cube method is extremely good at matching the distribution when used to spread and balance on all $p$ variables. The optimization nevertheless yields consistent improvements over the local cube method, and those improvements are  (relatively) greater for lower dimensions. The local pivotal method maintains good spread in all dimensions, but the balancing effect is heavily reduced as the number of dimensions increases. In a spatial context, that is probably not an issue as the target variables usually have spatial trends and do not depend linearly on the coordinates. From Table~\ref{tab:results_samplesize} we note that as the sample size grows, the effect of the improved distributional fit is even greater. For LPM, LCube and Circular DBD, the balance deviation is reducing at a rate close to $1/n$, whereas for SRS the balance deviation reduces at a rate close to $1/\sqrt{n}$. 

\begin{example}[Evaluation using the Meuse dataset] \label{ex:meuse}
We evaluate circular DBD using the Meuse dataset \citep{gstat}, which consists of $164$ locations in a river flood plain. After removal of two locations with missing values, we used the remaining $N=162$ locations as our population. We compare SRS, the local pivotal method (LPM), and the local cube method (LCube) against the proposed DBD for a sample size of $n=20$. All designs, except SRS, use five standardized auxiliary variables: coordinates ($x, y$), elevation (elev), organic matter (org), and copper concentration (Cu). In LCube, all five auxiliary variables are used for both spread and balance. Distances were calculated using standardized auxiliary variables. The target variables are the concentration of zinc (Zn), lead (Pb), and cadmium (Cd).
The DBD sequence was optimized over $10^7$ iterations.

\begin{table}[htb!]
\centering
\caption{Mean energy distance, RRMSE of HT-estimators, and 95\% CI coverage for the Meuse dataset ($N=162, n=20$). For LPM, LCube and circular DBD we applied the local mean variance estimator with $k=2$ neighbors. A total of $10^7$ iterations was used.
}
\label{tab:meuse_results}
\small
\begin{tabular}{ll rrrrrr rrr}
\toprule
&Method & Zn & Pb & Cd & Cu & elev & org & $\mathcal{E}$ & SB & LB \\
\midrule
RRMSE
&SRS  &   0.168 &   0.154 &   0.234 &   0.124 &   0.055 &   0.099 &   0.126 &   0.363 &   0.408\\
&LPM  &   0.102 &   0.094 &   0.120 &   0.053 &   0.016 &   0.043 &   0.044 &   0.270 &   0.176\\
&LCube  &   0.099 &   0.087 &   0.120 &   0.043 &   0.018 &   0.033 &   0.038 &   0.277 &   0.188\\
&DBD  &   0.088 &   0.077 &   0.118 &   0.028 &   0.014 &   0.021 &   0.032 &   0.265 &   0.165\\
\midrule
Coverage
&SRS  &   0.915 &   0.914 &   0.907 &   0.918 &   0.847 &   0.927\\
&LPM  &   0.945 &   0.960 &   0.928 &   0.960 &   0.990 &   0.973\\
&LCube  &   0.961 &   0.972 &   0.942 &   0.990 &   0.943 &   0.995\\
&DBD  &   0.994 &   0.988 &   0.895 &   1.000 &   0.969 &   1.000\\
\bottomrule
\end{tabular}
\end{table}
\end{example}

The results for Example~\ref{ex:meuse}, summarized in Table~\ref{tab:meuse_results}, demonstrate that circular DBD achieves the lowest mean energy distance, indicating superior distributional matching. This discrepancy reduction directly results in efficiency gains. In particular, DBD outperforms the state-of-the-art methods and provides the most accurate estimates for all auxiliary variables and all the target variables (Zn, Pb and Cd). This confirms that by matching the joint distribution of key covariates, DBD preserves the structure of the population. Finally, the coverage results show that DBD tends to maintain safe and conservative statistical inference. Only SRS was consistently below the nominal coverage rate.

\section{Computational scalability} \label{sec:scalability}

The scalability of our implementation of a DBD is determined by two phases: the pre-calculation and the iterative optimization. The pre-calculation involves summing pairwise distances for all units, which scales as $O(N^2)$ in time. This is a one-time cost. Memory requirements remain efficient, as we do not need to store the full distance matrix. 

For the optimization phase, as derived in Appendix~\ref{app:update}, the update strategy ensures that each iteration scales as $O(n)$. However, as $N$ grows, the permutation space expands factorially, typically requiring a super-linear increase in the number of iterations to come close to the global minimum. Empirical tests suggest that for populations up to $N \approx 20,000$, optimization is feasible on a standard desktop CPU in minutes.

For massive populations where direct optimization becomes prohibitive, the proposed method naturally supports a stratified approach. The population $U$ can be partitioned into $H$ disjoint strata (or blocks) $U_1, \dots, U_H$ based on geographical regions or fast clustering methods (e.g., $k$-means), such that the size of each stratum $N_h$ remains manageable (e.g., $N_h \le 10,000$). Since the variance of a stratified estimator is additive, optimizing the sequence $\bm{u}_h$ for each stratum independently preserves most of the variance reduction. This "Block-DBD" approach makes the method linearly scalable for populations of arbitrary size.

\section{Final remarks} \label{sec:conclusions}
We have introduced Distributionally Balanced Designs (DBD), a framework that unifies balanced and well-spread sampling through the principle of distributional matching. By optimizing a circular population sequence to minimize the energy distance between the sample and population distributions, we construct designs that are robust, model-free, and highly efficient. 

The conceptual appeal of DBD lies in the universality of distributional matching as a design principle. Unlike the cube method, which assumes linear relationships, or spatial methods that rely on geometric heuristics, DBD optimizes a direct statistical criterion: the distributional distance between sample and population. Proposition~\ref{prop:rkhs_bound} shows that the HT estimation error for targets that vary smoothly is controlled by the energy distance between the sample and population auxiliary distributions.

The simulation results confirm these theoretical advantages. Circular DBD consistently achieves the best distributional fit across all dimensions tested, with particularly strong performance in lower-dimensional settings. The method simultaneously delivers excellent balance and spatial spread, demonstrating that these properties emerge naturally from distributional matching.

The computational framework makes optimization accessible for populations up to $N \approx 20,000$. For larger populations, the stratified Block-DBD approach achieves linear scalability while preserving most of the variance reduction benefits.

Distributionally balanced designs represent a shift in survey methodology: rather than optimizing isolated properties, we ensure that the samples are close to a distributional microcosm of the population. This approach is particularly useful when multiple auxiliary variables describe ecological or environmental gradients and when the relationship between the target variable and auxiliaries is unknown or nonlinear. Practitioners can immediately start using the implementation we provide in the \texttt{rsamplr} R-package \citep{P25} to select distributionally balanced samples.

Finally, the applicability of DBD extends beyond classical survey sampling. In the era of large-scale machine learning, selecting a representative training subset (coreset) from a massive dataset is a frequent challenge. DBD offers a principled probability-based approach to data reduction that preserves the multivariate distribution of features, potentially improving the generalization of models trained on subsamples.

\appendix

\section{Proof of Proposition \ref{prop:rkhs_bound}} \label{app:proof}
\begin{proof}
Fix any subset $s\subset U$ of size $n$, and let $F_s$ and $F_U$ denote the
empirical distributions of $\{\bx_i:i\in s\}$ and $\{\bx_i:i\in U\}$, respectively.
Write $\bX_s\sim F_s$ and $\bX_U\sim F_U$. Then
\[
\hat Y(s)-Y
=
N\Big(\E[f(\bX_s)]-\E[f(\bX_U)]\Big).
\]
Since $f\in\mathcal{H}_k$, the reproducing property gives
$\E_P[f(\bX)]=\langle f,\mu_P\rangle_{\mathcal{H}_k}$ for any $P$, hence
\[
\Big|\E[f(\bX_s)]-\E[f(\bX_U)]\Big|
=
\Big|\langle f,\mu_{F_s}-\mu_{F_U}\rangle_{\mathcal{H}_k}\Big|
\le
\|f\|_{\mathcal{H}_k}\,\|\mu_{F_s}-\mu_{F_U}\|_{\mathcal{H}_k},
\]
by Cauchy-Schwarz in $\mathcal{H}_k$. Using the energy/MMD equivalence
$\|\mu_{F_s}-\mu_{F_U}\|_{\mathcal{H}_k}^2=\mathcal{E}(F_s,F_U)$ yields
\[
|\hat Y(s)-Y|
\le
N\,\|f\|_{\mathcal{H}_k}\,\sqrt{\mathcal{E}(F_s,F_U)}.
\]
Squaring both sides gives, for every fixed $s$,
\[
(\hat Y(s)-Y)^2
\le
N^2\,\|f\|_{\mathcal{H}_k}^2\,\mathcal{E}(F_s,F_U).
\]
Now let $S$ be the random sample under the design and take expectations:
\[
\E\!\left[(\hat Y-Y)^2\right]
\le
N^2\,\|f\|_{\mathcal{H}_k}^2\;\E\!\left[\mathcal{E}(F_S,F_U)\right],
\]
which proves \eqref{eq:rkhs_mse_bound}.
\end{proof}

\section{Efficient update of the objective function}\label{app:update}

The objective function \eqref{eq:mean_ed} can be written as
\[
\bar{\mathcal{E}}(\bm{u};n)
=
\frac{1}{N}\sum_{j=1}^N \frac{2}{n}\sum_{i\in s_j}\Phi_i-\frac{1}{N}\sum_{j=1}^N\frac{1}{n^2}\sum_{i,k\in s_j}\|\bx_i-\bx_k\| - \frac{1}{N}\sum_{i \in U}\Phi_i.
\]
Since each unit appears in exactly $n$ contiguous windows, the first component (aggregated attraction) simplifies to
\[
\frac{1}{N}\sum_{j=1}^N \frac{2}{n}\sum_{i\in s_j}\Phi_i = \frac{2}{N}\sum_{i\in U}\Phi_i,
\]
which does not depend on the permutation $\bm{u}$. The optimization therefore reduces to maximizing the aggregated repulsion term
\[
R(\bm{u})
=
\sum_{j=1}^N
\sum_{\substack{i<k\\ i,k\in s_j}}
\|\bx_i-\bx_k\|.
\]
Let positions in the circular sequence be indexed by
$r,v \in \{1,\dots,N\}$, and define the circular separation
$t_{rv} =\min\!\big(|r-v|,\; N-|r-v|\big)$. A pair of positions $(r,v)$ is contained together in exactly
\[
w(t_{rv})
=
\begin{cases}
n - t_{rv}, & t_{rv} < n,\\
0, & t_{rv} \ge n.
\end{cases}
\]
contiguous windows of size $n$. Therefore,
\[
R(\bm{u})
=
\sum_{1 \le r < v \le N}
w(t_{rv})\,
\|\bx_{u_r} - \bx_{u_v}\|.
\]
Only pairs with $t_{rv} < n$ contribute to the objective.
Thus, each position interacts with at most $2(n-1)$ of its nearest
neighbors along the circle.

Consider a proposed swap of positions $a$ and $b$ in the sequence.
All pairwise terms in $R(\bm{u})$ that do not involve $a$ or $b$
remain unchanged.
Hence, the change in the objective depends only on pairs
$(a,r)$ and $(b,r)$ with $t_{ar} < n$ or $t_{br} < n$.

Let $\bx_{u_a}$ and $\bx_{u_b}$ denote the auxiliary vectors before the swap,
and let the swapped values be $\bx_{u_a}'=\bx_{u_b}$ and
$\bx_{u_b}'=\bx_{u_a}$.
The change in $R$ is
\[
\Delta R
=
\sum_{\substack{r: \, t_{ar} < n \\ r \ne b}}
w(t_{ar})
\big(
\|\bx_{u_a}' - \bx_{u_r}\|
-
\|\bx_{u_a} - \bx_{u_r}\|
\big)
+
\sum_{\substack{r: \, t_{br} < n \\ r \ne a}}
w(t_{br})
\big(
\|\bx_{u_b}' - \bx_{u_r}\|
-
\|\bx_{u_b} - \bx_{u_r}\|
\big).
\]
Each summation involves at most $2(n-1)$ neighbors and the calculation of $\Delta R$ requires $O(n)$ distance evaluations. Since the full objective function can now be written as
\[
\bar{\mathcal{E}}(\bm{u};n)
=
\frac{1}{N}\sum_{i\in U}\Phi_i
-
\frac{2}{N n^2}\, R(\bm{u}),
\]
the change in the objective of a swap is $\Delta \bar{\mathcal{E}}=-2N^{-1}n^{-2}\Delta R$. Thus, each proposed swap in the simulated annealing procedure can be evaluated in $O(n)$ time.


\begin{thebibliography}{99}

\bibitem[Deville \& Tillé(2004)]{DevilleTille2004}
Deville, J.-C. \& Tillé, Y. (2004). 
Efficient balanced sampling: The cube method. 
\textit{Biometrika}, 91(4), 893-912.

\bibitem[Dickson \& Tillé(2016)]{DicksonTille2016}
Dickson, M.~M. \& Tillé, Y. (2016).
Ordered spatial sampling by means of the traveling salesman problem.
\textit{Computational Statistics}, 31(4), 1359-1372.

\bibitem[Grafström(2026)]{G26}
Grafström, A. (2026). 
Master samples with optimized panels.
\textit{To appear in Survey Methodology}.

\bibitem[Grafström et al.(2012)]{GrafstromEtAl2012}
Grafström, A., Lundström, N. L., \& Schelin, L. (2012). 
Spatially balanced sampling through the pivotal method. 
\textit{Biometrics}, 68(2), 514-520.

\bibitem[Grafström \& Tillé(2013)]{GT13}
Grafström, A., \& Tillé, Y. (2013). Doubly balanced spatial sampling with spreading and restitution of auxiliary totals. \textit{Environmetrics}, 24(2), 120-131.

\bibitem[Grafström \& Schelin(2014)]{GrafstromSchelin2014}
Grafström, A., \& Schelin, L. (2014).
How to select representative samples.
\textit{Scandinavian Journal of Statistics}, 41(2), 277-290.

\bibitem[Jauslin \& Tillé(2020)]{JauslinTille2020}
Jauslin, R. \& Tillé, Y. (2020).
Spatial spread sampling using weakly associated vectors.
\textit{Journal of Agricultural, Biological and Environmental Statistics},
25, 431-451.

\bibitem[Kirkpatrick et al.(1983)]{Kea83} 
Kirkpatrick, S., Gelatt Jr, C. D., \& Vecchi, M. P. (1983). Optimization by simulated annealing. \textit{Science}, 220(4598), 671-680.

\bibitem[Lahiri(1951)]{Lahiri1951}
Lahiri, D. B. (1951).
A method for sample selection providing unbiased ratio estimates.
\textit{Bulletin of the International Statistical Institute}, 33(2), 133-140.

\bibitem[Lahiri(1954)]{Lahiri1954}
Lahiri, D. B. (1954).
On the question of bias of systematic sampling.
\textit{Proceedings of the World Population Conference}, 6, 349-362.

\bibitem[Pebesma(2004)]{gstat}
Pebesma, E.~J. (2004). 
Multivariable geostatistics in S: the gstat package. 
\textit{Computers \& Geosciences}, 30(7), 683-691.

\bibitem[Prentius \& Grafström(2024)]{PG24}
Prentius, W., \& Grafström, A. (2024). How to find the best sampling design: A new measure of spatial balance. \textit{Environmetrics}, 35(7), e2878.

\bibitem[Prentius(2025)]{P25}
Prentius, W. (2025). rsamplr: Sampling Algorithms and Spatially Balanced Sampling. R-package version 0.2.0. https://cran.r-project.org/web/packages/rsamplr/.

\bibitem[Robertson et~al.(2013)]{Robertson2013}
Robertson, B. L., Brown, J. A., McDonald, T., \& Jaksons, P. (2013).
BAS: Balanced acceptance sampling of natural resources.
\textit{Biometrics}, 69(3), 776-784.

\bibitem[Robertson et~al.(2025)]{Robertson2025}
Robertson, B., Price, C., \& Reale, M. (2025). Double dipping with balanced sampling. \textit{Statistics \& Probability Letters}, 110562.

\bibitem[Sejdinovic et~al.(2013)]{Sejdinovic2013}
Sejdinovic, D., Sriperumbudur, B., Gretton, A., \& Fukumizu, K. (2013). 
Equivalence of distance-based and RKHS-based statistics in hypothesis testing. 
\textit{The Annals of Statistics}, 41(5), 2263-2291.

\bibitem[Stevens \& Olsen(2004)]{StevensOlsen2004}
Stevens Jr, D. L., \& Olsen, A. R. (2004). 
Spatially balanced sampling of natural resources. 
\textit{Journal of the American Statistical Association}, 99(465), 262-278.

\bibitem[Székely \& Rizzo(2013)]{SzekelyRizzo2013}
Székely, G. J., \& Rizzo, M. L. (2013). 
Energy statistics: A class of statistics based on distances. 
\textit{Journal of Statistical Planning and Inference}, 143(8), 1249-1272.

\end{thebibliography}
\end{document}